\newcommand{\eps}{\epsilon}
\newcommand{\intd}{{\rm d}}
\newcommand{\E}{\mathbb{E}}
\newcommand{\R}{{\mathbb{R}}}
\DeclareMathOperator{\tr}{tr}
\newcommand{\hnnote}[1]{}
\newcommand{\Th}{{^{\rm th}}}
\newcommand{\diff}{{\rm d}}
\newcommand{\calE}{\mathcal{E}}
\begin{document}
\title{Tight Lower Bound for Linear Sketches of Moments}
\titlerunning{Sketches of Moments}
\author{Alexandr Andoni\inst{1} \and Huy L. Nguy$\tilde{\mbox{\^{e}}}$n\inst{2} \and Yury Polyanskiy\inst{3} \and Yihong Wu\inst{4}}
\institute{Microsoft Research SVC, \email{andoni@microsoft.com}
\and
Princeton U, \email{hlnguyen@princeton.edu}
\and
MIT, \email{yp@mit.edu}
\and
UIUC, \email{yihongwu@illinois.edu}
}
\authorrunning{Alexandr Andoni et al.}
\maketitle

\begin{abstract}
The problem of estimating frequency moments of a data stream has
attracted a lot of attention since the onset of streaming
algorithms~\cite{AMS}. While the space complexity for approximately
computing the $p^{\rm th}$ moment, for $p\in(0,2]$ has been settled
  \cite{KNW-exact10}, for $p>2$ the exact complexity remains open.
  For $p>2$ the current best algorithm uses $O(n^{1-2/p}\log n)$ words
  of space \cite{AKO-precision,BO10}, whereas the lower bound is of
  $\Omega(n^{1-2/p})$ \cite{BJKS}.

In this paper, we show a tight lower bound of $\Omega(n^{1-2/p}\log
n)$ words for the class of algorithms based on linear sketches, which
store only a sketch $Ax$ of input vector $x$ and some (possibly
randomized) matrix $A$. We note that all known algorithms for this
problem are linear sketches.
\end{abstract}

\section{Introduction}

One of the classical problems in the streaming literature is that of
computing the $p$-frequency moments (or $p$-norm) \cite{AMS}. In
particular, the question is to compute the norm $\|x\|_p$ of a vector
$x\in \R^n$, up to $1+\eps$ approximation, in the streaming model
using low space. Here, we assume the most general model of streaming,
where one sees updates to $x$ of the form $(i,\delta_i)$ which means
to add a quantity $\delta_i\in \R$ to the coordinate $i$ of
$x$.\footnote{For simplicity of presentation, we assume that
  $\delta_i\in \{-n^{O(1)}, \ldots, n^{O(1)}\}$, although more refined
  bounds can be stated otherwise. Note that in this case, a ``word''
  (or measurement in the case of linear sketch --- see definition
  below) is usually $O(\log n)$ bits.} In this setting, linear
estimators, which store $Ax$ for a matrix $A$, are particularly useful
as such an update can be easily processed due to the equality
$A(x+\delta_i e_i)=Ax+A(\delta_ie_i)$.

The frequency moments problem is among the problems that received the
most attention in the streaming literature. For example, the space
complexity for $p\le 2$ has been fully understood. Specifically, for
$p=2$, the foundational paper of \cite{AMS} showed that $O_\eps(1)$
words (linear measurements) suffice to approximate the Euclidean
norm\footnote{The exact bound is $O(1/\eps^2)$ words; since in this
  paper we concentrate on the case of $\eps=\Omega(1)$ only, we drop
  dependence on $\eps$.}. Later work showed how to achieve the same
space for all $p\in(0,2)$ norms
\cite{I00b,Li-estimators08,KNW-exact10}. This upper bound has a
matching lower bound \cite{AMS,IW03,Y-phd,W-optimalSpace}. Further
research focused on other aspects, such as algorithms with improved
update time (time to process an update $(i,\delta_i)$)
\cite{NW-fastsketches10,KNW-exact10,Li-estimators08,GC-moments07,KNPW10-otimalUpdate}.

In constrast, when $p>2$, the exact space complexity still remains
open. After a line of research on both upper bounds
\cite{AMS,IW05,Ganguly-Fk06,MW-l1sampling},
\cite{AKO-precision,BO10,Ganguly-polyEstimators} and lower bounds
\cite{AMS,ChakrabartiKS03,BJKS,JST-sampling,PW12}, we presently know
that the best space upper bound is of $O(n^{1-2/p}\log n)$ words, and
the lower bound is $\Omega(n^{1-2/p})$ bits (or linear
measurements). (Very recently also, in a {\em restricted} streaming
model --- when $\delta_i=1$ --- \cite{BO12-pickanddrop} achieves an
improved upper bound of nearly $O(n^{1-2/p})$ words.) In fact, since
for $p=\infty$ the right bound is $O(n)$ (without the log factor), it
may be tempting to assume that there the right upper bound should be
$O(n^{1-2/p})$ in the general case as well.

In this work, we prove a tight lower bound of $\Omega(n^{1-2/p}\log
n)$ for the case of {\em linear estimator}. A linear estimator uses a
distribution over $m\times n$ matrices $A$ such that with high
probability over the choice of $A$, it is possible to calculate the
$p^{\rm th}$ moment $\|x\|_p$ from the sketch $Ax$. The parameter $m$,
the number of words used by the algorithm, is also called the number
of measurements of the algorithm. Our new lower bound is of
$\Omega(n^{1-2/p}\log n)$ measurements/words, which matches the upper
bound from \cite{AKO-precision,BO10}. We stress that essentially all
known algorithms in the general streaming model are in fact linear
estimators.

\begin{theorem}
\label{thm:fkLowerBound}
Fix $p \in (2,\infty)$. Any linear sketching algorithm for approximating the $p\Th$ moment
of a vector $x\in \R^n$ up to a multiplicative factor $2$ with probability
$99/100$ requires $\Omega(n^{1-2/p}\log n)$ measurements.

In other words, for any $p\in(2,\infty)$ there is a constant $C_p$ such that for any distribution on $m \times n$ matrices $A$ with $m < C_p n^{1-2/p}\log n$ and any function $f: \mathbb{R}^{m\times n}
\times \mathbb{R}^m \to \mathbb{R}_+$ we have

\begin{equation}
	\inf_{x\in \mathbb{R}^n} \Pr\left( {1\over 2} \|x\|_p \le f(A, Ax) \le 2\|x\|_p
\right) \leq {99\over 100} \, .
	\label{eq:minimax}
\end{equation}
\end{theorem}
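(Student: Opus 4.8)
The plan is to prove the minimax statement \prettyref{eq:minimax} by exhibiting a single hard prior on $x$ and invoking Yao's principle. First I would reduce to a clean Gaussian testing problem. By Yao's principle it suffices to fix a deterministic matrix $A$ with $m$ rows and lower bound the Bayes error under a prior $\pi$; moreover, since $Ax$ determines and is determined by the orthogonal projection of $x$ onto the row space of $A$, I may assume without loss of generality that $A$ has orthonormal rows, so that $AA^{\mathsf T}=I_m$ and $P\triangleq A^{\mathsf T}A$ is the orthogonal projection onto an $m$-dimensional subspace, with $\tr P=m$. I would then replace the worst-case multiplicative-approximation requirement by a detection problem: construct two priors $\pi_0,\pi_1$ such that $\|x\|_p$ is concentrated around values differing by a factor strictly larger than $4$ under the two priors (so no factor-$2$ estimate can be simultaneously valid for both), while the induced laws of the sketch $Ax$ are close in total variation. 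Gaussian priors are the natural choice here because the sketch of a Gaussian is again Gaussian, which makes the total-variation bound tractable.

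The core computation is the following. Take the null $\pi_0=N(0,\sigma^2 I_n)$, whose sketch is $N(0,\sigma^2 I_m)$, and an alternative that plants a spike at a uniformly random coordinate $J$, i.e.\ a mixture over $J$ of $N(0,\sigma^2 I_n+\theta^2 e_Je_J^{\mathsf T})$, whose sketch is the mixture over $J$ of $N(0,\sigma^2 I_m+\theta^2 a_Ja_J^{\mathsf T})$ with $a_J\triangleq Ae_J$. I would bound $\mathrm{TV}$ via the $\chi^2$-divergence between this mixture and $N(0,\sigma^2 I_m)$, using the closed-form Gaussian identity $\int \frac{\diff N(0,\Sigma_1)\,\diff N(0,\Sigma_2)}{\diff N(0,\Sigma_0)}$ together with Sherman--Morrison to linearize the rank-one perturbations. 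This reduces $1+\chi^2$ to $\E_{J,J'}\,g\big(\langle a_J,a_{J'}\rangle,\|a_J\|,\|a_{J'}\|\big)$ for an explicit $g$. The two structural facts I would invoke are $\sum_J\|a_J\|^2=\tr P=m$ (so a typical column has $\|a_J\|^2\approx m/n$) and, for the off-diagonal terms, $\sum_{J,J'}\langle a_J,a_{J'}\rangle^2=\tr(P^2)=\tr P=m$ since $P$ is a projection; the latter makes the cross terms negligible. The diagonal terms show that detection is impossible unless $\theta^2/\sigma^2\gtrsim n/m$, whereas changing $\|x\|_p$ by a constant factor forces a spike of size $\theta\gtrsim\sigma n^{1/p}$, i.e.\ $\theta^2/\sigma^2\gtrsim n^{2/p}$. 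Combining the two yields $m=\Omega(n^{1-2/p})$.

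The single-scale argument just sketched recovers only the previously known $\Omega(n^{1-2/p})$ bound, and the main obstacle is extracting the extra $\log n$ factor. The plan is to replace the one-spike prior by a multi-scale (nested) construction over $T=\Theta(\log n)$ geometrically spaced scales, designed so that each scale independently presents a spike-detection instance whose contribution to $\|x\|_p$ is of the same order, and so that $\|x\|_p$ is governed by whichever scale currently carries its spike. A linear sketch that solves the $p$-moment problem for all inputs must then solve every one of the $T$ scale-tests simultaneously. The crux, and the step I expect to be genuinely hard, is to prove that measurements \emph{cannot be amortized} across scales: despite the scales sharing (nested) coordinate supports, a measurement useful for detection at one scale contributes essentially nothing at the others, so the $\chi^2$/information budget is superadditive in $T$. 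I would formalize this by running the Gaussian $\chi^2$ computation scale-by-scale and showing that the per-scale requirement $\Omega(n^{1-2/p})$ on the ``effective rank'' of $A$ at each scale must be met from an essentially disjoint portion of the total budget $m=\tr P$, yielding $m\ge \Omega(T\cdot n^{1-2/p})=\Omega(n^{1-2/p}\log n)$. Putting the detection lower bound and the norm-separation together gives, by Yao's principle, the minimax bound \prettyref{eq:minimax} whenever $m<C_p\,n^{1-2/p}\log n$.
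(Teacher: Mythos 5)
Your first two paragraphs essentially reproduce the paper's argument: Yao's principle, orthonormal rows, a Gaussian null versus a uniformly-placed single spike, and a $\chi^2$ bound on the sketch distributions using $\sum_{i,j}\langle A_i,A_j\rangle^2=\tr(P^2)=m$. (The paper uses a mean-shift spike $x=y+C_1n^{1/p}e_t$ and the Ingster--Suslina identity $\chi^2(N(0,I_m)*P\,\|\,N(0,I_m))=\E[\exp(\langle X,X'\rangle)]-1$ rather than your covariance perturbation, but that is cosmetic.) The genuine gap is in your assessment of what this computation yields. You claim the single-scale argument gives only $\Omega(n^{1-2/p})$ and that the $\log n$ factor requires a multi-scale prior with a ``non-amortization'' claim that you yourself flag as the hard, unproven step. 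In fact the single-spike $\chi^2$ computation already delivers the full $\Omega(n^{1-2/p}\log n)$, and the reason is precisely the point your sketch gets wrong: you assert that ``the diagonal terms show detection is impossible unless $\theta^2/\sigma^2\gtrsim n/m$,'' but the diagonal terms $J=J'$ in $\E_{J,J'}[\exp(\theta^2\langle a_J,a_{J'}\rangle)]$ each carry weight $1/n^2$ (total weight $1/n$), so they contribute $o(1)$ to $\chi^2$ as long as $\theta^2\|a_J\|^2\le(1-\delta)\log n$ for typical columns --- that is, up to $m\lesssim n^{1-2/p}\log n$, not merely $m\lesssim n^{1-2/p}$. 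The paper extracts this by (i) restricting the spike location to the set $S$ of columns with $\|A_i\|_2\le 10\sqrt{m/n}$ (at most $n/100$ columns are excluded, costing $1/100$ in total variation), which forces every exponent $C_1^2n^{2/p}|\langle A_i,A_j\rangle|\le\eps\log n$, and (ii) expanding $e^{x}\le 1+x+x^2\,e^{x_{\max}}/x_{\max}^2$ and feeding in the constraint $\sum x_{ij}^2\le C_1^4n^{4/p}m$. With $\eps<1-2/p$ both resulting terms are $o(1)$.

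So the step on which your whole $\log n$ gain rests --- superadditivity of the information budget across $\Theta(\log n)$ nested scales --- is both absent from the proposal (you explicitly defer it as ``genuinely hard'') and unnecessary. As written, the proposal proves at most the already-known $\Omega(n^{1-2/p})$ bound. To repair it, drop the multi-scale construction and instead carry out the diagonal/off-diagonal analysis of the single-spike $\chi^2$ quantity carefully, conditioning on well-behaved columns of $A$ so that every exponent is at most $\eps\log n$; you will also need to handle the truncation of the priors (conditioning on $\|x\|_p\le Cn^{1/p}$ and $\|x\|_p\ge 4Cn^{1/p}$ respectively) and account for its total-variation cost, which the paper does via Markov's inequality on $\E[\|y\|_p^p]=nt_p$ and the triangle inequality for $V$.
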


The proof uses similar hard distributions as in some of the previous
work, namely all coordinates of an input vector $x$ have random small
values except for possibly one location. To succeed on these
distributions, the algorithm has to distinguish between a mixture of
Gaussian distributions and a pure Gaussian distribution. Analyzing the
optimal probability of success directly seems too difficult. Instead,
we use the $\chi^2$-divergence to bound the success probability, which
turns out to be much more amenable to analysis.

From a statistical perspective, the problem of linear sketches of
moments can be recast as a minimax statistical estimation problem
where one observes the pair $(Ax,A)$ and produces an estimate of
$\|x\|_p$. More specifically, this is a functional estimation problem,
where the goal is to estimation some functional (in this case, the
$p\Th$ moment) of the parameter $x$ instead of estimating $x$
directly.  Under this decision-theoretic framework, our argument can
be understood as Le Cam's two-point method for deriving minimax lower
bounds \cite{LeCam86}. The idea is to use a binary hypotheses testing
argument where two priors (distributions of $x$) are constructed, such
that 1) the $p\Th$ moment of $x$ differs by a constant factor under
the respective prior; 2) the resulting distributions of the sketches
$Ax$ are indistinguishable. Consequently there exists no moment
estimator which can achieve constant relative error. This approach is
also known as the method of fuzzy hypotheses \cite[Section
  2.7.4]{Tsybakov09}. See also \cite{BL96,IS03,Low10,CL11} for the
method of using $\chi^2$-divergence in minimax lower bound.

We remark that our proof does not give a lower bound as a function of
$\eps$ (but \cite{W13-personal} independently reports progress on this
front).

\subsection{Preliminaries}

We use the following definition of divergences.

\begin{definition}
Let $P$ and $Q$ be probability measures.
The {\em $\chi^2$-divergence} from $P$ to $Q$ is
	\begin{eqnarray*}
			\chi^2(P || Q) &\triangleq& \int \left(\frac{\intd P}{\intd Q} -
1\right)^2 \intd Q\\
				&=& \int \left(\frac{\intd P}{\intd Q}\right)^2 \intd Q -
1
	\end{eqnarray*}
The \emph{total variation distance} between $P$ and $Q$ is
\begin{equation}
V(P,Q) \triangleq 	\sup_{A} |P(A) - Q(A)| = \frac{1}{2} \int |\diff P - \diff Q|
	\label{eq:TV}
\end{equation}
\end{definition}

The operational meaning of the total variation distance is as follows: 
Denote the optimal sum of Type-I and Type-II error probabilities of the binary hypotheses testing problem $H_0: X \sim P$ versus $H_1: X \sim Q$ by
\begin{equation}
	\calE(P,Q) \triangleq \inf_{A} \{P(A) + Q(A^{\rm c})\},
	\label{eq:perr}
\end{equation}
where the infimum is over all measurable sets $A$ and the corresponding test is to declare $H_1$ if and only if $X \in A$. 
Then
\begin{equation}
\calE(P,Q) = 1 - V(P,Q).	
	\label{eq:ETV}
\end{equation}

The total variation and the $\chi^2$-divergence are related by the following inequality 
\cite[Section 2.4.1]{Tsybakov09}:
\begin{equation}
	2 V^2(P,Q) \leq \log(1+\chi^2(P||Q))
	\label{eq:tvchi2}
\end{equation}
Therefore, in order to establish that two hypotheses cannot be distinguished with vanishing error probability, it suffices to show that the $\chi^2$-divergence is bounded. 

One additional fact about $V$ and $\chi^2$ is the data-processing property~\cite{Csiszar67}: If
a measurable function $f:A\to B$ carries probability measure $P$ on $A$  to $P'$ on $B$, and carries $Q$ to $Q'$ then
\begin{equation}\label{eq:tvdproc}
	 V(P, Q) \ge V(P', Q')\,.
\end{equation}

\section{Lower Bound Proof}

In this section we prove Theorem~\ref{thm:fkLowerBound} for arbitrary fixed measurement matrix $A$.
Indeed, by Yao's minimax principle, we only need to demonstrate an input
distribution and show that any deterministic algorithm succeeding on
this distribution with probability 99/100 must use
$\Omega(n^{1-2/p}\log n)$ measurements.

Fix $p \in (2,\infty)$. Let $A \in \R^{m \times n}$ be a fixed matrix which is used to produce the linear
sketch, where $m<C_p n^{1-2/p}\log n$ is the number of measurements and $C_p$ is to be specified.
Next, we construct distributions $D_1$ and $D_2$ for $x$ to fulfill the following properties: 
\begin{enumerate}
	\item $\|x\|_p \leq C n^{1/p}$  on the entire support of $D_1$, and $\|x\|_p \geq 4 C n^{1/p}$ on the entire support of $D_2$, for some appropriately chosen constant $C$.
	\item Let $E_1$ and $E_2$ denote the distribution of $Ax$ when $x$ is drawn from $D_1$ and $D_2$ respectively. Then $V(E_1,E_2) \leq 98/100$.
\end{enumerate}  
The above claims immediately imply the desired \prettyref{eq:minimax} via the relationship between statistical tests and estimators.
 To see this, note that any moment estimator $f$ induces a test for distinguishing $E_1$ versus
$E_2$: declare $D_2$ if and only if $\frac{f(A,Ax)}{2C n^{1/p}} \geq 1$. 
In other words,
\begin{align}
\lefteqn{\Pr_{x \sim \frac{1}{2}(D_1+D_2)}\left( {1\over 2} \|x\|_p \le f(A, Ax) \le 2\|x\|_p
\right)} \nonumber\\
	& \le {1\over 2} \Pr_{x \sim D_2}\left(f(A, Ax) > 2Cn^{1/p} \right) + 
		{1\over 2} \Pr_{x \sim D_1}\left(f(A, Ax) \le 2Cn^{1/p} \right) \\
	& \le {1\over 2} (1+V(E_1, E_2)) \le {99\over100}\,,
\end{align}
where the last line follows from the characterization of the total variation in \prettyref{eq:TV}.

The idea for constructing the desired pair of distributions is to use the Gaussian distribution and its sparse perturbation. Since the moment of a Gaussian random vector takes values on the entire $\R_+$, we need to further truncate by taking its conditioned version.
To this end, let $y\sim N(0,I_n)$ be a standard normal random vector and $t$ a random index uniformly distributed on $\{1,\ldots,n\}$ and independently of $y$. Let $\{e_1,\ldots,e_n\}$ denote the standard basis of $\R^n$. Let $\bar D_1$ and $\bar D_2$ be input distributions defined as follows:
Under the distribution $\bar D_1$, we let the input vector
$x$ equal to $y$. Under the distribution $\bar D_2$, we add a one-sparse perturbation by setting $x = y + C_1 n^{1/p} e_t$ with an appropriately chosen constant $C_1$.
Now we set $D_1$ to be $\bar D_1$ conditioned on the event $E=\{z: \|z\|_p \leq C n^{1/p}\}$, i.e., $D_1(\cdot) = \frac{\bar D_1(\cdot \cap E)}{\bar D_1(E)}$, 
 and set $D_2$ to be $\bar D_2$ conditioned on the event $F=\{z: \|z\|_p \geq 4 C n^{1/p}\}$. By the triangle inequality,
\begin{align}
V(E_1,E_2)
\leq & ~ V(\bar E_1, \bar E_2) + V(\bar E_1, E_1) + V(\bar E_2, E_2) \nonumber \\
\leq & ~ V(\bar E_1, \bar E_2) + V(\bar D_1, D_1) + V(\bar D_2, D_2) \nonumber \\
= & ~ V(\bar E_1, \bar E_2) + \Pr_{x \sim \bar D_1}(\|x\|_p \geq C n^{1/p}) + \Pr_{x \sim \bar
D_2}(\|x\|_p \leq 4 C n^{1/p})  \label{eq:tv3},
\end{align}
where the second inequality follows from the data-processing inequality~\eqref{eq:tvdproc} (applied
to the mapping $x \mapsto Ax$). It remains to bound the three terms in \prettyref{eq:tv3}.

First observe that for any $i$, $\E[|y_i|^p] = t_p$ where
$t_p=2^{p/2}\Gamma(\tfrac{p+1}{2})\pi^{-1/2}$. Thus, $\E[\|y\|_p^p] =
nt_p$. By Markov inequality, $\|y\|_p^p \ge 100 nt_p$ holds with probability at most $1/100$. Now, if we set 
\begin{equation}
C_1 = 4\cdot
(100t_p)^{1/p}+10,	
	\label{eq:C1}
\end{equation}
 we have $(y_t + C_1 n^{1/p})^p > 4^p\cdot 100 nt_p $
with probability at least $99/100$, and hence the third term in~\eqref{eq:tv3} is also smaller than
$1\over 100$. It remains to show that $V(\bar E_1, \bar E_2) \leq 96/100$.


Without loss of generality, we assume that the rows of $A$ are orthonormal
since we can always change the basis of $A$ after taking the
measurements. Let $\eps$ be a constant smaller than $1-2/p$. Assume that $m < \tfrac{\eps}{100C_1^2}
\cdot n^{1-2/p}\log n$. Let $A_i$ denote the $i\Th$ column of $A$. Let $S$ be the set of
indices $i$ such that $\|A_i\|_2 \le 10\sqrt{m/n} \leq
n^{-1/p}\sqrt{\eps\log n}/C_1$. Let $\bar{S}$ be the complement of $S$. Since $\sum_{i=1}^n \|A_i\|_2^2 = m$, we have
$|\bar{S}| \leq n/100$. Let $s$ be uniformly distributed on $S$ and $\tilde{E}_2$ the distribution of $y + C_1 n^{1/p} e_s$. By the convexity of $(P,Q) \mapsto V(P,Q)$ and the fact that $V(P,Q)\leq 1$, we have 
$V(\bar E_1,\bar E_2) \leq V(\bar E_1, \tilde E_2)	+ \frac{|\bar S|}{n} \leq V(\bar E_1, \tilde E_2) + 1/100$. In view of \prettyref{eq:tvchi2}, it suffices to show that 
\begin{equation}
	\chi^2(\tilde E_2\|\bar E_1) \leq c
	\label{eq:chi2c}
\end{equation}
for some sufficiently small constant $c$. To this end, we first prove a useful fact about the measurement matrix $A$.


%
%
%

\begin{lemma}\label{lem:eqn1}
For any matrix $A$ with $m< \tfrac{\eps}{100C_1^2}
\cdot n^{1-2/p}\log n$ orthonormal rows, denote by $S$
the set of column indices $i$ such that $\|A_i\|_2 \le 10\sqrt{m/n}$. Then
$$|S|^{-2}\sum_{i,j\in S} e^{C_1^2 n^{2/p}\langle A_i, A_j\rangle} \le 1.03C_1^4(n^{-2+4/p+\eps}m+n^{2/p-1}\sqrt{m}) + 1$$
\end{lemma}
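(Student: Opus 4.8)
The plan is to reduce everything to spectral/trace properties of the Gram-type matrix $P \triangleq A^\top A \in \R^{n\times n}$, whose entries are exactly $P_{ij} = \langle A_i, A_j\rangle$. Since the rows of $A$ are orthonormal we have $AA^\top = I_m$, so $P^2 = A^\top(AA^\top)A = A^\top A = P$; thus $P$ is an orthogonal projection of rank $m$. This yields the two identities I will lean on throughout: $\sum_i P_{ii} = \tr(P) = m$ and $\sum_{i,j} P_{ij}^2 = \tr(P^\top P) = \tr(P) = m$. Writing $\beta \triangleq C_1^2 n^{2/p}$, the quantity to bound is $|S|^{-2}\sum_{i,j\in S} e^{\beta P_{ij}}$, and I would first record the pointwise control on $S$: for $i,j\in S$, Cauchy--Schwarz and the definition of $S$ give $|P_{ij}| = |\langle A_i,A_j\rangle| \le \|A_i\|_2\|A_j\|_2 \le 100\,m/n =: r$, while the assumption $m < \tfrac{\eps}{100 C_1^2} n^{1-2/p}\log n$ translates into the crucial estimate $\beta r \le \eps\log n$, i.e. $e^{\beta r}\le n^{\eps}$. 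This is what keeps the exponential polynomially bounded and is, in my view, the pivotal observation.

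Next I would expand $e^{\beta P_{ij}} = \sum_{k\ge 0} (\beta P_{ij})^k/k!$ and sum termwise over $i,j\in S$, treating three regimes. The $k=0$ term contributes $|S|^2$. For $k=1$, Cauchy--Schwarz against the variance identity gives $\sum_{i,j\in S} P_{ij} \le \sum_{i,j\in S}|P_{ij}| \le |S|\big(\sum_{i,j\in S}P_{ij}^2\big)^{1/2}\le |S|\sqrt m$. For $k\ge 2$ I would use $P_{ij}^k \le |P_{ij}|^{k-2}P_{ij}^2 \le r^{k-2}P_{ij}^2$, so that $\sum_{i,j\in S}P_{ij}^k \le r^{k-2}m$; summing the series and using $e^x-1-x\le \tfrac{x^2}{2}e^x$ gives $\sum_{k\ge 2}\tfrac{\beta^k}{k!}\sum_{i,j\in S}P_{ij}^k \le \tfrac{m}{r^2}(e^{\beta r}-1-\beta r)\le \tfrac{m\beta^2}{2}e^{\beta r}\le \tfrac{C_1^4}{2} n^{4/p+\eps}m$. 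Altogether this yields $\sum_{i,j\in S}e^{\beta P_{ij}} \le |S|^2 + \beta|S|\sqrt m + \tfrac{C_1^4}{2} n^{4/p+\eps}m$.

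Finally I would divide by $|S|^2$ and use $|S| \ge n - |\bar S| \ge 0.99\,n$ (established before the lemma) to convert the two correction terms into $\tfrac{\beta\sqrt m}{|S|}\le 1.02\,C_1^2 n^{2/p-1}\sqrt m$ and $\tfrac{C_1^4 n^{4/p+\eps}m}{2|S|^2}\le 0.52\,C_1^4 n^{-2+4/p+\eps}m$. Since $C_1 > 10$ forces $C_1^2\le C_1^4$, both leading constants fit under $1.03\,C_1^4$, producing the claimed bound $1 + 1.03\,C_1^4\big(n^{-2+4/p+\eps}m + n^{2/p-1}\sqrt m\big)$.

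The only genuine obstacle is the infinite tail $k\ge 2$: a crude termwise bound there would reintroduce a factor growing with $n$, so the argument truly needs the variance identity $\sum_{i,j}P_{ij}^2 = m$ and the uniform bound $\beta r\le\eps\log n$ working in tandem — one controls the $\sum P_{ij}^2$ factor and the other caps $e^{\beta r}$ at $n^\eps$. Everything past that is bookkeeping of absolute constants, which I expect to go through comfortably given the slack between $0.52,\,1.02$ and the target coefficient $1.03\,C_1^4$.
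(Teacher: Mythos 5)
Your proposal is correct and follows essentially the same route as the paper: both rest on the identity $\sum_{i,j}\langle A_i,A_j\rangle^2=\|A^\top A\|_F^2=m$, the pointwise bound $C_1^2 n^{2/p}|\langle A_i,A_j\rangle|\le \eps\log n$ on $S$, and a Taylor expansion of the exponential split into the constant, linear (Cauchy--Schwarz), and $k\ge 2$ tail terms, with the tail controlled by extracting a $P_{ij}^2$ factor and capping the remainder at $n^{\eps}$. The paper merely packages this as an abstract optimization over nonnegative variables $x_i$ with $\sum x_i^2 \le C_1^4 n^{4/p}m$ and $x_i\le\eps\log n$; the substance and constants match yours.
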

\begin{proof}
Because $AA^T = I_m$, we have
$$\sum_{i,j\in [n]} \langle A_i,A_j\rangle^2 = \sum_{i,j\in [n]} (A^T A)_{ij}^2 = \|A^TA\|_F^2 = \tr(A^TAA^TA) = \tr(A^TA) = \|A\|_F^2 = m.$$

We consider the following relaxation: let $x_{1}, \dots, x_{|S|^2} \ge 0$ where $\sum_i x_i^2 \le C_1^4 n^{4/p}\cdot m$ and $x_i \le \eps\log n$. We now upper bound $|S|^{-2}\sum_{i=1}^{|S|^2} e^{x_i}$. We have
\begin{align*}
|S|^{-2}\sum_{i=1}^{|S|^2}e^{x_i} &= |S|^{-2}\sum_{i=1}^{|S|^2} \left(1+x_i + \sum_{j\ge 2} \frac{x_i^j}{j!}\right)\\
&\le 1+|S|^{-2}\sum_{i=1}^{|S|^2} x_i + |S|^{-2}\sum_{i=1}^{|S|^2} x_i^2\sum_{j\ge 2} \frac{(\max_{i\in[n^2]} x_i)^{j-2}}{j!}\\
&\le 1+|S|^{-2}\sqrt{|S|^2\sum_i x_i^2}+|S|^{-2}(C_1^4mn^{4/p})\left(\frac{e^{\eps \log n}}{(\eps\log n)^2}\right)\\
&\le 1+1.03C_1^2 \sqrt{m}n^{2/p-1} + 1.03 C_1^4n^{-2+4/p+\eps}m.
\end{align*}
 The last inequality uses the fact that $99n/100\le|S|\le n$.
Applying the above upper bound to $x_{(i-1)|S|+j} =
C_1^2n^{2/p}|\langle A_i, A_j\rangle|\le
C_1^2n^{2/p}\|A_i\|\cdot\|A_j\|\le \eps\log n$, we conclude the lemma.
\end{proof}

We also need the following lemma \cite[p. 97]{IS03} which gives a formula for the $\chi^2$-divergence from a Gaussian location mixture to a standard Gaussian distribution:
\begin{lemma}
Let $P$ be a distribution on $\mathbb{R}^m$. Then
\[
\chi^2( N(0,I_m) * P \, || \, N(0,I_m)) = \mathbb{E}[\exp(\langle X, X'\rangle)] -1 \, ,
\]
where $X$ and $X'$ are independently drawn from $P$.
	\label{lmm:chi2}
\end{lemma}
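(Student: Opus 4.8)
The plan is to compute the $\chi^2$-divergence directly from its definition, reducing everything to Gaussian density ratios and the moment generating function of a Gaussian vector. Write $\phi$ for the density of $N(0,I_m)$, set $Q=N(0,I_m)$, and let $P_1=N(0,I_m)*P$ denote the mixture, whose density at $z$ is $g(z)=\E_X[\phi(z-X)]$ with $X\sim P$. The first step is to form the likelihood ratio and simplify the Gaussian density ratio: since $\phi(z-x)/\phi(z)=\exp(\langle z,x\rangle-\tfrac12\|x\|^2)$, we obtain
\[
\frac{\intd P_1}{\intd Q}(z)=\frac{g(z)}{\phi(z)}=\E_X\bigl[\exp(\langle z,X\rangle-\tfrac12\|X\|^2)\bigr].
\]

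Next I would substitute this into the second expression for the $\chi^2$-divergence from the Definition, namely $\chi^2(P_1\|Q)=\int (\intd P_1/\intd Q)^2\,\intd Q-1$, and expand the square by introducing a second independent copy $X'\sim P$. Writing the outer integral as an expectation $\E_Z$ over $Z\sim N(0,I_m)$, this yields
\[
\chi^2(P_1\|Q)+1=\E_{X,X'}\E_Z\bigl[\exp(\langle Z,X+X'\rangle)\,\exp(-\tfrac12\|X\|^2-\tfrac12\|X'\|^2)\bigr].
\]
Because the integrand is nonnegative, Tonelli's theorem lets me swap the order of integration and perform the inner $Z$-integral first. The only genuine computation is this inner Gaussian integral: for a fixed vector $v$ one has the Gaussian MGF $\E_Z[\exp(\langle Z,v\rangle)]=\exp(\tfrac12\|v\|^2)$. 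Applying it with $v=X+X'$ and expanding $\tfrac12\|X+X'\|^2=\tfrac12\|X\|^2+\langle X,X'\rangle+\tfrac12\|X'\|^2$, the quadratic terms cancel exactly against the prefactor $-\tfrac12\|X\|^2-\tfrac12\|X'\|^2$, leaving the claimed expression $\E_{X,X'}[\exp(\langle X,X'\rangle)]$.

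This lemma is essentially a calculation with no real obstacle; the one point requiring care is justifying the interchange of the $Z$- and $(X,X')$-expectations. Tonelli applies unconditionally here since every factor is positive, so the identity holds in $[0,\infty]$—in particular it correctly reports $\chi^2=+\infty$ whenever the right-hand expectation diverges. If one wishes to sidestep all measure-theoretic bookkeeping, I would note that in the intended application $P$ is a finite uniform mixture of point masses (the columns $C_1 n^{1/p}A_s$, $s\in S$), so every integral in sight is a finite sum of convergent Gaussian integrals and the interchange is trivial.
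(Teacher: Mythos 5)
Your computation is correct: the likelihood ratio simplification, the introduction of an independent copy $X'$, the Tonelli interchange, and the Gaussian MGF evaluation $\E_Z[\exp(\langle Z,v\rangle)]=\exp(\tfrac12\|v\|^2)$ with the exact cancellation of the quadratic terms all check out, and the identity is correctly valid in $[0,\infty]$. The paper does not prove this lemma at all---it cites it to \cite[p.~97]{IS03}---and your argument is precisely the standard derivation given there, so nothing further is needed.
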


We now proceed to proving an upper bound on the $\chi^2$-divergence
between $\bar E_1$ and $\tilde E_2$.

\begin{lemma}
$$\chi^2(\tilde E_2\|\bar E_1) \le 1.03C_1^4(n^{-2+4/p+\eps}m + n^{2/p-1}\sqrt{m})$$
\end{lemma}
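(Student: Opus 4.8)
The plan is to recognize that, once the rows of $A$ are orthonormal, $\bar E_1$ is exactly a standard Gaussian on $\R^m$ and $\tilde E_2$ is a Gaussian location mixture, so that Lemma~\ref{lmm:chi2} applies directly and reduces the $\chi^2$-divergence to the averaged exponential sum already controlled by Lemma~\ref{lem:eqn1}.

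First I would note that since $AA^T = I_m$, the image $Ay$ of $y\sim N(0,I_n)$ is distributed as $N(0,I_m)$; hence $\bar E_1 = N(0,I_m)$. Next, under $\tilde E_2$ the sketch equals $A(y + C_1 n^{1/p} e_s) = Ay + C_1 n^{1/p} A_s$, where $A_s$ is the $s\Th$ column of $A$ and $s$ is uniform on $S$ and independent of $y$. Letting $P$ denote the law of the random shift $C_1 n^{1/p} A_s$, this identifies $\tilde E_2$ with the convolution $N(0,I_m) * P$.

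With these two identifications in hand, Lemma~\ref{lmm:chi2} yields $\chi^2(\tilde E_2 \,\|\, \bar E_1) = \E[\exp(\langle X, X'\rangle)] - 1$, where $X, X'$ are independent draws from $P$. Writing $X = C_1 n^{1/p} A_i$ and $X' = C_1 n^{1/p} A_j$ with $i,j$ independent and uniform on $S$, the inner product becomes $\langle X, X'\rangle = C_1^2 n^{2/p}\langle A_i, A_j\rangle$, so the expectation is exactly $|S|^{-2}\sum_{i,j\in S} \exp(C_1^2 n^{2/p}\langle A_i,A_j\rangle)$. Applying Lemma~\ref{lem:eqn1} to this sum, the additive $1$ it contributes cancels the $-1$ from Lemma~\ref{lmm:chi2}, leaving precisely the claimed bound.

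There is no substantive obstacle remaining, since the two preparatory lemmas have already absorbed all the work: Lemma~\ref{lem:eqn1} handles the Taylor-expansion estimate of the exponential sum using $\sum_{i,j}\langle A_i,A_j\rangle^2 = m$ and the near-orthogonality threshold defining $S$, while Lemma~\ref{lmm:chi2} supplies the closed form for the divergence. The only point demanding care is the bookkeeping that matches the mixing law $P$ to the column structure of $A$, ensuring the pairwise inner products $\langle A_i,A_j\rangle$ surface in the exponent exactly as in Lemma~\ref{lem:eqn1}.
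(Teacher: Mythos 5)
Your proposal is correct and follows essentially the same route as the paper: identify $\bar E_1 = N(0,I_m)$ and $\tilde E_2$ as the Gaussian location mixture with shifts $C_1 n^{1/p}A_i$ for $i$ uniform on $S$, apply Lemma~\ref{lmm:chi2} to get the exponential pairwise-inner-product sum, and invoke Lemma~\ref{lem:eqn1}. The only difference is cosmetic: you write the mixture centers correctly as $C_1 n^{1/p} A_i$, whereas the paper's displayed mixture $\frac{1}{|S|}\sum_{i\in S} N(A_i, I_m)$ omits the scaling factor in what appears to be a typo.
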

\begin{proof}
Let $p_i=1/|S|~\forall i\in S$ be the probability $t=i$. 
Recall that $s$ is the random index uniform on the set $S= \{i \in [n]: \|A_i\|_2 \le 10\sqrt{m/n} \}$.
Note that $Ay \sim N(0,AA^T)$. Since 
$AA^T=I_m$, we have $\bar E_1 = N(0,I_m)$. Therefore $A(y+C_1 n^{1/p}) \sim \tilde E_2 = \frac{1}{|S|} \sum_{i \in S} N(A_i,I_m)$, a Gaussian location mixture.

Applying Lemma \ref{lmm:chi2} and then Lemma \ref{lem:eqn1}, we have
\begin{align}
\nonumber \chi^2(\tilde E_2\|\bar E_1)
&= \sum_{i,j\in S} p_i p_j e^{C_1^2 n^{2/p}\langle A_i, A_j\rangle}-1\\
\nonumber
&\le 1.03C_1^4(n^{-2+4/p+\eps}m + n^{2/p-1}\sqrt{m}).
\end{align}
\end{proof}

Finally, to finish the lower bound proof, since $\eps < 1-2/p$ we have $n^{-2+4/p+\eps}m + n^{2/p-1}\sqrt{m} =
o(1)$, implying~\eqref{eq:chi2c} for all sufficiently large $n$ and completing the proof of $V(E_1,E_2) \le {98/100}$.

\section{Discussions}
\label{sec:discussion}

While Theorem~\ref{thm:fkLowerBound} is stated only for constant $p$, the proof also gives lower bounds for $p$ depending on $n$.
At one extreme, the proof recovers the known lower bound for approximating the $\ell_{\infty}$-norm of $\Omega(n)$. Notice that the ratio between the $\ell_{(\ln n)/\varepsilon}$-norm and the $\ell_{\infty}$-norm of any vector is bounded by $e^{\varepsilon}$ so it suffices to consider $p=(\ln n)/\varepsilon$ with a sufficiently small constant $\varepsilon$. Applying the Stirling approximation to the crude value of $C_1$ in the proof, we get $C_1 = \Theta(\sqrt{p})$. Thus, the lower bound we obtain is $\Omega(n^{1-2/p}(\log n) / C_1^2) = \Omega(n)$.

At the other extreme, when $p\rightarrow 2$, the proof also gives super constant lower bounds up to $p=2+\Theta(\log\log n/\log n)$. Notice that $\eps$ can be set to $1-2/p-\Theta(\log\log n/\log n)$ instead of a positive constant strictly smaller than $1-2/p$. For this value of $p$, the proof gives a ${\rm polylog}(n)$ lower bound. We leave it as an open question to obtain tight bounds for $p=2+o(1)$.

\subsubsection*{Acknowledgments.}

HN was supported by NSF CCF 0832797, and a Gordon Wu Fellowship.
YP's work was supported by the Center
for Science of Information (CSoI), an NSF Science and Technology
Center, under grant agreement CCF-0939370.


\def\cprime{$'$} \def\cprime{$'$}

\end{document}